\documentclass[letterpaper,10pt,conference]{ieeeconf}
\ifdefined\pdfminorversion
\fi
\IEEEoverridecommandlockouts
\usepackage{amsmath,amssymb,mathtools,bm}

\usepackage{amsthm}
\usepackage{graphicx}
\usepackage{booktabs}
\usepackage{multirow}
\usepackage{array}
\usepackage{url}
\usepackage{cite}
\usepackage{xcolor}
\usepackage{siunitx}
\let\labelindent\relax
\usepackage{enumitem}

\usepackage[protrusion=false]{microtype}
\usepackage{placeins}
\usepackage{flushend}
\usepackage{algorithm}
\usepackage{algorithmic}

\ifdefined
\fi

\graphicspath{{figures/}}
\DeclareMathOperator*{\argmax}{arg\,max}
\DeclareMathOperator{\diag}{diag}
\newtheorem{theorem}{Theorem}
\newtheorem{proposition}{Proposition}
\newtheorem{remark}{Remark}
\newtheorem{assumption}{Assumption}

\title{\LARGE \bf
Gamma–Laplace Surrogate for Variance-Aware Sensor Placement for Detecting Poisson Distributed Targets}

\author{Md Muhtasim Fuad, Mingyu Kim$^*$, Daniel J.~Stilwell, Jorge Jimenez%
\thanks{Md Muhtasim Fuad and Mingyu Kim are with the Department of Electrical and Computer Engineering, Georgia Southern University, Statesboro, GA, USA
        {\tt\small \{mf21842,mkim\}@georgiasouthern.edu}}%
\thanks{Daniel J.~Stilwell is with the Department of Electrical and Computer Engineering, Virginia Tech, Blacksburg, VA, USA
        {\tt\small stilwell@vt.edu}}%
\thanks{Jorge Jimenez is with Johns Hopkins University Applied Physics Laboratory, Laurel, MD, USA
        {\tt\small jorge.jimenez@jhuapl.edu}}%
\thanks{$^*$Corresponding author}%
}

\begin{document}
\maketitle
\thispagestyle{empty}\pagestyle{empty}

\begin{abstract}
Sensor placement for stochastically arriving targets is studied using a void-probability objective, defined as the probability that no target remains undetected over a finite horizon. Direct optimization is intractable because it requires an expectation over a random intensity field. A common surrogate based on Jensen’s inequality replaces the random field with its mean, yielding a tractable objective but discarding distributional information. The proposed \emph{Gamma–Laplace surrogate} approximates cellwise exposure variables with moment-matched Gamma distributions and evaluates the objective using their closed-form Laplace transforms. Unlike scalar moment corrections or local expansions, this approach operates at the distribution level and retains a Laplace-transform structure under an independent-Gamma approximation. A log transformation and constant shift yield an objective shown to be monotone submodular, enabling efficient greedy optimization with performance guarantees. Numerical experiments on real ship-traffic data show consistency with the Jensen surrogate in low-uncertainty regimes. Under moderate uncertainty, the method improves candidate-location rankings, leading to more robust sensor selections and reduced sensitivity to variability in target arrivals.
\end{abstract}

\section{Introduction}

Sensor placement supports stochastic target detection in surveillance, environmental monitoring, and autonomous systems. In many applications, uncertain target arrivals are modeled using spatial point processes, including Poisson processes and log-Gaussian Cox processes (LGCPs) \cite{moller1998log, diggle2013statistical}. The \emph{void probability}, defined as the probability that no targets remain undetected, directly measures sensing effectiveness under uncertainty and provides an objective for sensor placement \cite{kim2023optimal,kim2025variance,kim2025outlier}.

Direct optimization of the void probability is computationally intractable because it requires the expectation of an exponential functional of a random variable. Prior work has therefore used tractable surrogate objectives. Jensen’s inequality yields a mean-based lower bound that enables monotone submodular optimization and efficient greedy algorithms with a $(1-1/e)$ guarantee \cite{kim2023optimal, nemhauser1978analysis, krause2008near}. However, this approximation depends only on the mean and ignores uncertainty, which can lead to suboptimal sensor placement in heterogeneous or high-variance environments.

Variance-aware approximations based on second-order Taylor expansions have been proposed to incorporate uncertainty \cite{kim2025variance}. They approximate the objective function locally using aggregate mean and variance. The placement-dependent variance correction can oppose the mean term, so the monotonicity and submodularity argument for the mean-only objective does not carry over. Section~\ref{sec:scalar} explains this limitation; Section~\ref{sec:numerical} evaluates candidate-ranking effects.

This paper addresses this limitation with a \emph{Gamma--Laplace surrogate}, a distributional approximation of the exposure random variable. It approximates the distribution of the aggregated exposure using moment-matched Gamma variables and evaluates the resulting \emph{Laplace transform} in a closed form. The approach incorporates both mean and variance while preserving the exponential transform structure of the original objective. The resulting surrogate therefore admits a monotone and submodular transformed objective, enabling efficient greedy optimization with theoretical $(1-1/e)$ guarantees. It also recovers the lower bound in Jensen's inequality in the zero-variance limit, providing a consistent generalization of mean-based methods.

The framework is evaluated on ship-traffic data from \cite{boem_noaa_accessais}. The posterior intensity is estimated using integrated nested Laplace approximation (INLA) \cite{rue2009approximate,martins2013bayesian,juan2021enhancing}, a deterministic Bayesian inference method for latent Gaussian models that provides accurate and computationally efficient approximations to posterior marginals.

\textbf{Related Work}
Submodular optimization is commonly used to formulate sensor placement and experimental design problems because it offers computational efficiency and provable approximation guarantees \cite{nemhauser1978analysis, krause2008near}. These approaches have been widely applied in Gaussian process sensing, environmental monitoring, and active learning \cite{krause2007nonmyopic, chepuri2015sparse}. Poisson and Cox process models are standard tools for representing spatially distributed targets with uncertain intensities in stochastic detection settings \cite{diggle2013statistical, moller1998log,jimenez2025optimal}.

Laplace-transform approximations for random variables have been studied extensively in applied probability. Gamma distributions often provide tractable moment-matching approximations because their Laplace transforms have closed forms \cite{asmussen2007stochastic, haenggi2012stochastic}. These approximations incorporate distributional information while retaining computational tractability.

Surrogates for void-probability optimization trade off tractability and fidelity. The Jensen-based method in \cite{kim2023optimal} is tractable and preserves submodularity but ignores uncertainty. The Taylor-based method in \cite{kim2025variance} incorporates variance locally but does not inherit the mean-only greedy guarantee. The proposed Gamma--Laplace surrogate incorporates distributional information while retaining a structure compatible with greedy optimization.

\subsection*{Contributions}
This paper addresses the tradeoff between approximation fidelity and optimization tractability in void-probability sensor placement: 
i) a Gamma--Laplace surrogate is derived by modeling cellwise integrated intensities with moment-matched Gamma distributions, yielding a closed-form Laplace-transform expression that incorporates both mean and variance, recovers the lower bound in Jensen's inequality in the zero-variance limit, and avoids the locality of Taylor-based approximations; 
ii) after a log transformation and constant shift, the surrogate objective is shown to be monotone and submodular, enabling efficient greedy optimization with a $(1-1/e)$ approximation guarantee; 
iii) experiments on real ship-arrival data demonstrate that the surrogate matches baseline performance in low-uncertainty regimes and improves candidate ranking and greedy selection decisions under moderate uncertainty.

Section~II formulates the problem and reviews Jensen-based surrogates. Section~III introduces the Gamma--Laplace surrogate and its properties. Section~IV analyzes the surrogate and its optimization. Section~V describes the dataset, experimental setup, and numerical results. Section~VI discusses implications and limitations. Section~VII concludes the paper.

\section{Problem Formulation}
\label{sec:problem}
The stochastic sensor placement problem combines a probabilistic target arrival model, a sensing model, and a void-probability objective.

\subsection{Target Arrival Model}

A spatial domain \(\Psi \subset \mathbb{R}\) is considered in which targets arrive according to a stochastic intensity field \(\lambda(s)\), where \(s \in \Psi\). To capture spatial uncertainty and heterogeneity, the intensity is modeled as a log-Gaussian Cox process (LGCP)
\begin{align*}
\log \lambda(s) \sim \mathcal{GP}(\mu_\lambda(s), \Sigma_\lambda(s,s'))
\end{align*}
where \(\mu_\lambda(s)\) and \(\Sigma_\lambda(s,s')\) denote the mean and covariance functions of the latent Gaussian field and $s,s'\in \Psi$.

\subsection{Sensor Detection and Placement Notation}

A sensor network is deployed over \(\Psi\) to detect targets under the stochastic arrival model. The finite candidate set of admissible sensor locations is denoted by \(\mathcal{V} \subseteq \Psi\), and the selected subset by \(\mathcal{A} \subseteq \mathcal{V}\), with \(|\mathcal{A}| \le K\). A sensor at \(a_i \in \mathcal{V}\) detects a target at location \(s\) with probability \(\gamma(s,a_i)\).

Assuming conditional independence of detection failures across sensors, a target at \(s\) is missed by all sensors in \(\mathcal{A}\) with probability $\pi(s,\mathcal{A}) = \prod_{a_i \in \mathcal{A}} \left(1-\gamma(s,a_i)\right)$.

\subsection{Void Probability Objective}

For a planning horizon $T$, define the aggregate missed exposure
\[
Y(\mathcal A)=T\int_\Psi\lambda(s)\,\pi(s,\mathcal A)\,ds.
\]
Here $\lambda(s)$ is an arrival rate per unit distance and time, so $Y(\mathcal A)$ is the conditional expected number of missed targets given the intensity field, not a realized count. Conditional Poisson thinning gives a probability $e^{-Y(\mathcal A)}$ of no missed targets. Averaging over the random intensity field yields the void probability
\[
\nu(\mathcal A)=\mathbb E[e^{-Y(\mathcal A)}].
\]

\subsection{Discretization and Aggregate Moments}

Discretize $\Psi$ into $G$ cells with representative locations $s_g$ and spatial weights $\Delta_g$. Define the cell exposure $X_g:=T\Delta_g\lambda(s_g)$ and the miss probability $\pi_g(\mathcal A):=\pi(s_g,\mathcal A)$. The discretized aggregate is
\begin{align}
Y_g(\mathcal A)&:=\sum_{g=1}^G X_g\,\pi_g(\mathcal A),\label{eq:aggregate_Y}\\
\nu_g(\mathcal A)&:=\mathbb E[e^{-Y_g(\mathcal A)}]\approx\nu(\mathcal A).
\label{eq:discrete_void}
\end{align}
The subscript in $Y_g$ denotes the \emph{grid approximation of the entire aggregate}, not a single cell or the Gamma model; the summation index $g$ in $X_g$ denotes a cell. Both $X_g$ and $Y_g$ are random conditional expected counts induced by the intensity field.

Write $\mu_g=\mathbb E[X_g]$, $v_g=\operatorname{Var}(X_g)$, and $C_{gh}=\operatorname{Cov}(X_g,X_h)$. The grid aggregate moments are
\begin{align*}
\mu_{Y_g}(\mathcal A)&:=\mathbb E[Y_g(\mathcal A)]=\sum_g\mu_g\pi_g(\mathcal A),\\
\sigma_{Y_g}^2(\mathcal A)&:=\operatorname{Var}(Y_g(\mathcal A))\\
&=\sum_g v_g\pi_g(\mathcal A)^2
 +\sum_{g\ne h}C_{gh}\pi_g(\mathcal A)\pi_h(\mathcal A).
\end{align*}
Denoting the existing finite sum by $Y_g(\mathcal A)$ does not change calculations on the same grid. Discretization error is separate from Gamma moment matching and the independence approximation; its numerical magnitude is not quantified here.

\subsection{Scalar Surrogates Used for Comparison}
\label{sec:scalar}

For the continuous exposure, write $\mu_Y=\mathbb E[Y]$ and $\sigma_Y^2=\operatorname{Var}(Y)$. Jensen's inequality gives $e^{-\mu_Y}\le\nu$, while the second-order Taylor approximation is $e^{-\mu_Y}(1+\sigma_Y^2/2)$ \cite{kim2025variance}. On the grid used by \emph{all} three methods, these become
\begin{align*}
\nu_{\mathrm J}(\mathcal A)&=e^{-\mu_{Y_g}(\mathcal A)}\le\nu_g(\mathcal A),\\
\nu_{\mathrm{MV}}(\mathcal A)&=e^{-\mu_{Y_g}(\mathcal A)}
 \left(1+\tfrac12\sigma_{Y_g}^2(\mathcal A)\right).
\end{align*}
Neither formula requires independent cells. The Taylor correction is local and can exceed one. Its variance term can also counteract the mean term: for one cell with miss probability $t$, mean $\mu$, and variance $v$, its derivative is
\[
\frac{d}{dt}\left[e^{-\mu t}(1+vt^2/2)\right]
=e^{-\mu t}\left[vt-\mu(1+vt^2/2)\right].
\]
The bracketed term need not be nonpositive. Thus decreasing the miss probability need not increase the Taylor surrogate, and the classical monotone-submodular greedy guarantee does not follow. No such guarantee is asserted here for $\nu_{\mathrm{MV}}$.

\section{Gamma--Laplace Surrogate}
\label{sec:gamma_laplace}

Unlike $\nu_{\mathrm J}$ and $\nu_{\mathrm{MV}}$, the proposed construction approximates cellwise distributions and their Laplace factors rather than applying a local correction to the scalar objective function. The physical aggregate $Y$ and its grid approximation $Y_g$ denote the same quantities for all methods.

\begin{assumption}[Cellwise factorization]
\label{ass:factorization}
The cell exposures $X_1,\ldots,X_G$ are treated as independent only when constructing the Gamma--Laplace surrogate.
\end{assumption}
This is a surrogate-level approximation, not a claim about the correlated posterior field. Finer discretization does not make independence more plausible: with a fixed Mat\'ern correlation length $\ell$, adjacent cell centers approach each other as $\Delta\to0$, and their latent-field correlation approaches one, not zero. A coarser grid may reduce short-range dependence but also loses spatial resolution; discretization alone does not establish independence.

At the second-moment level, factorization retains the cell means, marginal variances, and aggregate mean but omits the cross-cell covariance contribution to $\sigma_{Y_g}^2$. It also omits higher-order dependence. The resulting diagonal variance is
\[
\sigma_{\mathrm{diag}}^2(\mathcal A):=\sum_gv_g\pi_g(\mathcal A)^2,
\]
which is not generally $\sigma_{Y_g}^2(\mathcal A)$. Consequently, the surrogate's dependence error need not disappear with grid refinement. Monte Carlo evaluation in Section~\ref{sec:numerical} retains correlated posterior samples, and the reported grid-sensitivity study assesses performance, not the validity of independence.

For the surrogate only, each cell exposure is matched to a Gamma distribution with shape $k_g$ and scale $\theta_g$:
\begin{align}
k_g=\frac{\mu_g^2}{v_g},\qquad
\theta_g=\frac{v_g}{\mu_g}.
\label{eq:gamma_params}
\end{align}
For $\mu_g,v_g>0$, the Gamma Laplace transform is
\begin{align}
\mathbb E_{\Gamma}[e^{-tX_g}]=(1+\theta_gt)^{-k_g},\quad t\ge0.
\label{eq:gamma_laplace}
\end{align}
Here $\mathbb E_\Gamma$ denotes expectation under the matched Gamma law, not the original posterior. Applying \eqref{eq:gamma_laplace} at $t=\pi_g(\mathcal A)$ and factorizing gives
\begin{align}
\nu_{\mathrm{GL}}(\mathcal A)=
\prod_{g=1}^G(1+\theta_g\pi_g(\mathcal A))^{-k_g}.
\label{eq:gamma_product}
\end{align}
When $v_g=0$, use the limit $e^{-\mu_g\pi_g}$; a zero-mean nonnegative exposure contributes a factor of one.

\subsection*{Remarks on the Gamma--Laplace Surrogate}
The product in \eqref{eq:gamma_product} lies in $(0,1]$ and is a closed-form Laplace transform under the independent-Gamma model. It is not an exact evaluation of the correlated posterior objective. Moment matching retains the marginal means and variances; higher marginal moments are prescribed by the Gamma family, not recovered from the data. Unlike a single scalar moment correction, the cellwise construction preserves the spatial allocation of marginal uncertainty and supports the optimization structure proved below.

\section{Analysis and Optimization of the Proposed Surrogate}
\label{sec:theory}

The proposed Gamma--Laplace surrogate is analyzed in terms of (i) consistency with the Jensen approximation, (ii) its probabilistic interpretation under an independent-Gamma model, and (iii) the induced optimization structure that enables efficient sensor placement. All results are stated for the surrogate model and do not directly apply to the exact correlated posterior objective.

\subsection{Consistency in low-uncertainty regimes}
The proposed surrogate recovers the lower bound in Jensen's inequality when marginal variances vanish on a fixed grid.

\begin{proposition}[Zero-variance limit]
\label{prop:zero_var}
Fix \(\mu_g>0\) and \(t\in[0,1]\). If \(v_g\to 0\), then
\begin{align}
\lim_{v_g\to 0}(1+\theta_g t)^{-k_g}=e^{-\mu_g t}
\end{align}
Consequently,
\begin{align}
\lim_{\{v_g\}\to 0}\nu_{\mathrm{GL}}(\mathcal{A})=\nu_{\mathrm{J}}(\mathcal{A})
\end{align}
\end{proposition}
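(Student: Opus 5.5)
The plan is to substitute the moment-matched parameters \eqref{eq:gamma_params} and pass to logarithms. Then
\[
\log(1+\theta_g t)^{-k_g} = -\frac{\mu_g^2}{v_g}\,\log\!\Big(1+\frac{v_g}{\mu_g}t\Big).
\]
Set $x = v_g t/\mu_g$, which tends to $0$ as $v_g\to0$ because $\mu_g>0$ is fixed and $t\in[0,1]$ is bounded.

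\emph{Case $t=0$.} Both sides equal $1$, so there is nothing to prove.

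\emph{Case $t>0$.} Rewrite the logarithm as
\[
-\mu_g t\cdot \frac{\log(1+x)}{x}.
\]
Apply the standard limit $\log(1+x)/x\to1$ as $x\to0$. To make it quantitative, and uniform in $t\in[0,1]$, use the elementary bounds $x - x^2/2 \le \log(1+x)\le x$ for $x\ge0$. These give
\[
-\mu_g t \;\le\; \log(1+\theta_g t)^{-k_g} \;\le\; -\mu_g t + \tfrac12 v_g t^2,
\]
which sandwiches the log-factor between two quantities that both converge to $-\mu_g t$. Continuity of $\exp$ then yields the first display of Proposition~\ref{prop:zero_var}. The lower bound also shows that each Gamma factor dominates its Jensen counterpart, consistent with $\nu_{\mathrm J}\le\nu_g$.

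For the consequence, take $t=\pi_g(\mathcal A)\in[0,1]$ in each factor of \eqref{eq:gamma_product}. Since $G$ is finite, the limit of the product is the product of the limits. This gives
\[
\prod_g e^{-\mu_g\pi_g(\mathcal A)} = e^{-\sum_g\mu_g\pi_g(\mathcal A)} = e^{-\mu_{Y_g}(\mathcal A)} = \nu_{\mathrm J}(\mathcal A).
\]
The joint limit $\{v_g\}\to0$ poses no difficulty, because each factor depends only on its own $v_g$ and the bounds above are uniform. Cells with $\mu_g=0$ contribute a factor $1$ on both sides, by the convention stated after \eqref{eq:gamma_product}. Cells already having $v_g=0$ use the stated limit $e^{-\mu_g\pi_g}$ by definition.

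I expect no step to be a real obstacle. The only point needing care is that $k_g\to\infty$ while $\theta_g\to0$, an indeterminate $1^\infty$ form. The logarithmic sandwich resolves it cleanly and avoids invoking L'Hôpital's rule.
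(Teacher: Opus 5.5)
Your proof is correct and follows the same route as the paper. You substitute the moment-matched parameters, take logarithms to get $-\frac{\mu_g^2}{v_g}\log\bigl(1+\frac{v_g}{\mu_g}t\bigr)\to-\mu_g t$, then exponentiate and multiply over the finitely many cells. Your two-sided bound $x-x^2/2\le\log(1+x)\le x$ is a quantitative, uniform-in-$t$ version of the paper's $\log(1+x)=x+\mathcal O(x^2)$ step, and it also gives the factorwise inequality $(1+\theta_g t)^{-k_g}\ge e^{-\mu_g t}$.
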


\begin{proof}
From \eqref{eq:gamma_params} and $\log(1+x)=x+\mathcal O(x^2)$,
\[
\log(1+\theta_gt)^{-k_g}
=-\frac{\mu_g^2}{v_g}\log\!\left(1+\frac{v_g}{\mu_g}t\right)
\longrightarrow-\mu_gt.
\]
Exponentiating and multiplying over the fixed $G$ cells gives $\nu_{\mathrm{GL}}\to e^{-\sum_g\mu_g\pi_g}=\nu_{\mathrm J}$, the lower bound in Jensen's inequality on the same grid.
\end{proof}

\subsection{Distributional interpretation}
The surrogate admits an exact interpretation under an independent-Gamma model.

\begin{proposition}[Exactness under independent-Gamma model]
\label{prop:gamma_exact}
Under Assumption~\ref{ass:factorization}, if each $X_g$ follows the matched Gamma law, then $\mathbb E_\Gamma[e^{-Y_g(\mathcal A)}]=\nu_{\mathrm{GL}}(\mathcal A)$ for every placement $\mathcal A$.
\end{proposition}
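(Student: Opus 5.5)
The argument is a direct computation: factor the expectation using independence, then evaluate each factor with the Gamma Laplace transform \eqref{eq:gamma_laplace}. Fix a placement $\mathcal A$. The miss probabilities $\pi_g(\mathcal A)=\prod_{a_i\in\mathcal A}(1-\gamma(s_g,a_i))$ are then deterministic numbers in $[0,1]$, so each $t_g:=\pi_g(\mathcal A)$ is an admissible argument $t\ge0$ for \eqref{eq:gamma_laplace}. By \eqref{eq:aggregate_Y},
\[
e^{-Y_g(\mathcal A)}=\prod_{g=1}^G e^{-t_gX_g}.
\]
This is a product of bounded, nonnegative, measurable functions of the individual $X_g$.

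\textbf{Step 1: factor the expectation.} Under Assumption~\ref{ass:factorization}, the $X_g$ are independent under the matched Gamma law. The functions $e^{-t_gX_g}$ of distinct cells are therefore independent and bounded by one. The expectation of their product equals the product of their expectations:
\[
\mathbb E_\Gamma[e^{-Y_g(\mathcal A)}]=\prod_g\mathbb E_\Gamma[e^{-t_gX_g}].
\]
Integrability is automatic because each factor lies in $(0,1]$.

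\textbf{Step 2: evaluate each factor.}
\begin{itemize}
\item If $\mu_g,v_g>0$, then $X_g\sim\Gamma(k_g,\theta_g)$ with the parameters in \eqref{eq:gamma_params}. Formula \eqref{eq:gamma_laplace} gives $(1+\theta_gt_g)^{-k_g}$. Should this need justifying, it follows from $\int_0^\infty x^{k-1}e^{-x(1/\theta+t)}dx=\Gamma(k)(1/\theta+t)^{-k}$.
\item If $v_g=0$ with $\mu_g>0$, the matched law is interpreted as the point mass at $\mu_g$. This is the weak limit of the Gamma laws as $v_g\to0$, consistent with Proposition~\ref{prop:zero_var}. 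The factor is then $e^{-\mu_gt_g}$, matching the stated convention.
\item If $\mu_g=0$, a nonnegative variable with zero mean is almost surely zero, so the factor is $1$, again matching the convention.
\end{itemize}

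\textbf{Step 3: conclude.} Multiplying the factors reproduces \eqref{eq:gamma_product}. Since $\mathcal A$ was arbitrary, the identity holds for every placement.

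The only step needing care is making precise what ``the matched Gamma law'' means in the degenerate cells, namely the point-mass interpretation above. The factorization itself is routine once independence is assumed.
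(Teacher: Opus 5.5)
Your proof is correct and follows the paper's argument: you factor $\mathbb E_\Gamma[e^{-Y_g(\mathcal A)}]$ over cells using independence and evaluate each factor with the Gamma Laplace transform at $t=\pi_g(\mathcal A)$. Your point-mass treatment of the degenerate cells ($v_g=0$ or $\mu_g=0$) is a harmless refinement; the paper handles those cells only through its stated limiting convention.
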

\begin{proof}
Independence and \eqref{eq:gamma_laplace} give
\begin{align*}
\mathbb E_\Gamma[e^{-Y_g(\mathcal A)}]
&=\prod_{g=1}^G\mathbb E_\Gamma[e^{-X_g\pi_g(\mathcal A)}]\\
&=\prod_{g=1}^G(1+\theta_g\pi_g(\mathcal A))^{-k_g}
=\nu_{\mathrm{GL}}(\mathcal A).\qedhere
\end{align*}
\end{proof}
This exactness applies only to the independent-Gamma model, not to the correlated posterior exposure law.

\subsection{Optimization structure}
The surrogate defines a tractable objective for sensor placement.

\begin{theorem}[Transformed monotone-submodular objective]
\label{thm:submodularity}
For $0\le\gamma(s_g,v)<1$, define
\begin{align*}
\ell_{g,v}:=\log\!\left(1-\gamma(s_g,v)\right)\le 0,\qquad
T_g(\mathcal{A}) := \sum_{v\in\mathcal{A}}\ell_{g,v}
\end{align*}
so that \(\pi_g(\mathcal{A})=e^{T_g(\mathcal{A})}\). Let
\begin{align*}
F(\mathcal{A})
=
-\log \nu_{\mathrm{GL}}(\mathcal{A})
=
\sum_{g=1}^{G}k_g\log\!\left(1+\theta_g e^{T_g(\mathcal{A})}\right)
\end{align*}
and define \(H(\mathcal{A}) := F(\emptyset)-F(\mathcal{A}) \ge 0\).
Then \(H(\mathcal{A})\) is monotone nondecreasing and submodular.
\end{theorem}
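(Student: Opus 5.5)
The plan is to reduce everything to a scalar function of the cellwise log-miss exponent $T_g$, establish the needed properties of that function, and then add over cells. Writing
\[
H(\mathcal A)=\sum_{g=1}^G h_g\bigl(T_g(\mathcal A)\bigr),\qquad h_g(x):=k_g\log(1+\theta_g)-k_g\log\!\left(1+\theta_g e^{x}\right),
\]
we use $T_g(\emptyset)=0$, so $F(\emptyset)=\sum_g k_g\log(1+\theta_g)$. Nonnegative sums of monotone submodular set functions are monotone submodular, since $k_g\ge 0$ and the constant shift does not affect marginal gains. It therefore suffices to prove the claim for each cell term $\mathcal A\mapsto h_g(T_g(\mathcal A))$.

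Zero-variance cells, where the factor is $e^{-\mu_g\pi_g}$, are handled separately. Their term is $\mu_g(1-e^{T_g})$, which is the same structure with $\phi(x)=-e^{x}$. Cells with zero mean contribute the constant $0$.

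The first step is to study the scalar map $\phi_g(x)=-k_g\log(1+\theta_g e^{x})$ on $x\le 0$. Differentiating gives
\[
\phi_g'(x)=-\frac{k_g\theta_g e^{x}}{1+\theta_g e^{x}}\le 0,\qquad
\phi_g''(x)=-\frac{k_g\theta_g e^{x}}{(1+\theta_g e^{x})^2}\le 0,
\]
so $\phi_g$ is nonincreasing and concave. The key structural fact is that $T_g$ is a modular set function with nonpositive increments $\ell_{g,v}\le 0$. Adding sensors therefore moves $x=T_g$ leftward, and larger sets sit further left.

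Monotonicity follows directly. Adding $v$ changes $x$ to $x+\ell_{g,v}\le x$, and since $\phi_g$ is nonincreasing, $h_g$ does not decrease. Hence $H$ is monotone nondecreasing, and $H(\mathcal A)\ge H(\emptyset)=0$.

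Submodularity is the step requiring care, because of the sign conventions. Take $\mathcal A\subseteq\mathcal B$ and $v\notin\mathcal B$, and set $a=T_g(\mathcal A)\ge b=T_g(\mathcal B)$ and $\delta=\ell_{g,v}\le 0$. The marginal gains are $\phi_g(a+\delta)-\phi_g(a)$ and $\phi_g(b+\delta)-\phi_g(b)$, and we need the first to be at least the second. Each is $-\int_{\delta}^{0}\phi_g'(y+s)\,ds$ with $y\in\{a,b\}$, taken over the same interval of shifts. Concavity makes $\phi_g'$ nonincreasing, so $\phi_g'(a+s)\le\phi_g'(b+s)$ because $a\ge b$. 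Hence $-\phi_g'(a+s)\ge-\phi_g'(b+s)$, and the gain at $\mathcal A$ dominates the gain at $\mathcal B$.

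Note that concavity alone, not monotonicity, drives this inequality: a concave function of a modular function with nonpositive weights, evaluated in this orientation, gives diminishing returns. The main obstacle is simply keeping these orientations straight. Without the log transform, composing with the nonlinearity $e^{x}$ inside $\log(1+\theta e^{x})$ could fail concavity; the reparametrization through $T_g$ is what makes the second derivative sign-definite.

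Finally, summing the cellwise inequalities over $g$ gives submodularity of $H$, and the $(1-1/e)$ greedy guarantee follows from the result of Nemhauser et al.
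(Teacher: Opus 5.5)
Your proof is correct and follows essentially the paper's route: both write the objective as a sum over cells of a sign-definite-curvature scalar function of the modular exponent $T_g$, whose weights are nonpositive, and both get monotonicity from the sign of that function's derivative. The only difference is cosmetic: you check diminishing returns over nested sets $\mathcal A\subseteq\mathcal B$ with a single integral of the nonincreasing derivative $\phi_g'$, whereas the paper checks the pairwise second difference of $\psi_g\circ T_g=-\phi_g\circ T_g$ via a double integral of $\psi_g''$ (your explicit handling of zero-variance and zero-mean cells is a small extra the paper leaves implicit).
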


\begin{proof}[Proof sketch]
Write \(F(\mathcal{A})=\sum_{g=1}^{G}\psi_g\!\big(T_g(\mathcal{A})\big)\), where \(\psi_g(t):=k_g\log(1+\theta_g e^{t})\). Each \(T_g\) is \emph{modular}---a sum over the elements of \(\mathcal{A}\) with weights \(\ell_{g,v}=\log(1-\gamma(s_g,v))\le 0\). Differentiation gives
\[
\psi_g'(t)=\frac{k_g\theta_g e^{t}}{1+\theta_g e^{t}}>0,\qquad
\psi_g''(t)=\frac{k_g\theta_g e^{t}}{(1+\theta_g e^{t})^{2}}>0,
\]
so \(\psi_g\) is strictly increasing and convex.

\emph{Submodularity.} Fix \(S\subseteq\mathcal{V}\) and distinct \(a,b\notin S\), and set \(x=T_g(S)\), \(\alpha=\ell_{g,a}\), \(\beta=\ell_{g,b}\). By the modularity of \(T_g\), the discrete second difference of \(\psi_g\circ T_g\) at \(S\) equals
\begin{align*}
&\psi_g(x+\alpha+\beta)-\psi_g(x+\alpha)-\psi_g(x+\beta)+\psi_g(x)\\
&\hspace{1em}=\int_\alpha^0\!\int_\beta^0\psi_g''(x+u+v)\,dv\,du\ge0.
\end{align*}
Both integration intervals have nonnegative length because $\alpha,\beta\le0$. Thus $\psi_g\circ T_g$ is supermodular. Summing over $g$ makes $F$ supermodular and $H=F(\emptyset)-F$ submodular.

\emph{Monotonicity.} For \(v\notin\mathcal{A}\), \(T_g(\mathcal{A}\cup\{v\})=T_g(\mathcal{A})+\ell_{g,v}\le T_g(\mathcal{A})\); since \(\psi_g\) is increasing, \(F(\mathcal{A}\cup\{v\})\le F(\mathcal{A})\), i.e., \(H(\mathcal{A}\cup\{v\})\ge H(\mathcal{A})\). Also $H(\emptyset)=0$. The case of perfect detection ($\gamma=1$) follows by continuity. \qedhere
\end{proof}

\begin{remark}[Scope of guarantees]
The submodularity and approximation guarantees apply to the surrogate objective \(H(\mathcal{A})\) under Assumption~\ref{ass:factorization} and do not extend to the exact objective induced by the correlated posterior field.
\end{remark}

\subsection{Greedy Optimization for Sensor Placement}
\label{sec:algorithm}

The surrogate objective in Theorem~\ref{thm:submodularity} is optimized under the cardinality constraint \(|\mathcal{A}| \le K\) using the greedy rule
\begin{align}
a_{k+1}
=
\argmax_{v \in \mathcal{V}\setminus \mathcal{A}_k}
H(\mathcal{A}_k \cup \{v\})
\end{align}
with initialization \(\mathcal{A}_0=\emptyset\) and updates \(\mathcal{A}_{k+1}=\mathcal{A}_k\cup\{a_{k+1}\}\) for \(k=0,1,\dots,K-1\). The guarantee is specifically
\[
H(\mathcal A_K)\ge(1-1/e)\max_{|\mathcal A|\le K}H(\mathcal A)
\]
\cite{nemhauser1978analysis}, not a multiplicative guarantee for $\nu$ or $\nu_{\mathrm{GL}}$. Efficient evaluation is enabled by the update
\begin{align*}
\pi_g(\mathcal{A}\cup\{a\})
=
\pi_g(\mathcal{A})\left(1-\gamma(s_g,a)\right)
\end{align*}
which allows incremental computation of marginal gains across candidate locations. Algorithm~\ref{alg:greedy} summarizes the full procedure.

\begin{algorithm}[H]
\caption{Greedy sensor placement for maximizing the surrogate objective \(H(\mathcal{A})\)}
\label{alg:greedy}
\begin{algorithmic}[1]
\STATE Initialize \(\mathcal{A}_0 \gets \emptyset\)
\FOR{\(k = 0,1,\dots,K-1\)}
    \STATE \(a_{k+1} \gets \arg\max\limits_{v \in \mathcal{V} \setminus \mathcal{A}_k}
    H(\mathcal{A}_k \cup \{v\})\)
    \STATE \(\mathcal{A}_{k+1} \gets \mathcal{A}_k \cup \{a_{k+1}\}\)
\ENDFOR
\STATE \textbf{return} \(\mathcal{A}_K\)
\end{algorithmic}
\end{algorithm}

\section{Numerical Results}
\label{sec:numerical}

This section compares the proposed Gamma--Laplace surrogate with two baselines: the Jensen mean-based surrogate and a variance-incorporating Jensen approximation based on a second-order Taylor expansion \cite{kim2025variance}. The comparison uses a ship-arrival dataset and examines (i) final sensor-placement performance evaluated using posterior Monte Carlo, (ii) greedy-path behavior during sensor selection, and (iii) surrogate fidelity in terms of calibration and ranking accuracy.

The one-dimensional example isolates the effect of uncertainty on greedy decisions, final placement performance, and surrogate fidelity. The framework also accommodates higher-dimensional discretizations.

\subsection{Sensor model}

The one-dimensional sensing model is
$\gamma(x,a_i)
=
\rho \exp\!\left(-\frac{(x-a_i)^2}{\sigma_\ell}\right)$
where $\rho = 0.95$ and $\sigma_\ell = 0.90$, with distances measured in kilometers. Candidate locations are defined on a uniform \SI{50}{m} grid over the spatial domain, and greedy selection is performed under a cardinality constraint of $K = 20$ sensors.

\subsection{Ship Arrival Data, Intensity Model, and Evaluation Setup}

\textbf{Data and geometry.}
The dataset consists of \emph{ship arrivals} rather than full trajectories; an extension to trajectory data is discussed in \cite{kim2025barrier}. All 785 observations lie on a vertical segment with endpoints \(P_1 = (-76.05,\,36.88)\) and \(P_2 = (-76.05,\,37.04)\) near Norfolk, Virginia, USA \cite{boem_noaa_accessais}. The segment length is approximately \SI{17.811}{km}, and the observed crossings span \SIrange{5.428}{17.777}{km} from \(P_1\). The recorded positions are consistent with this geometry to machine precision, so the problem reduces to a one-dimensional sensing domain along the coordinate \(x\), measured from \(P_1\). Fig.~\ref{fig:geometry} shows a heatmap of ship traffic near \(P_1\) and \(P_2\), supporting this line-segment representation. 

\begin{figure}[t]
\centering

\includegraphics[width=0.8\linewidth, height= 0.6\linewidth]{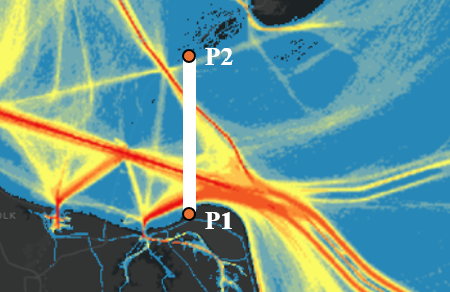}
\caption{Ship-traffic heatmap used for geometric validation of the inferred segment between $P_1$ and $P_2$.}
\label{fig:geometry}
\end{figure}

\textbf{Discretization and statistical model.}
The segment is discretized into \(G=80\) cells of width \(\Delta=L/G\approx\SI{0.223}{km}\). The observed crossing count in cell \(g\) is \(n_g\), and the latent log-intensity is \(f_g\). The model uses \(E_g = T_c \Delta\) with \(T_c=\SI{7}{days}\), and
\[
n_g \mid f_g \sim \mathrm{Poisson}(E_g e^{f_g})
\]
with a Mat\'ern-\(3/2\) covariance
\begin{align*}
K_{gh}
=
\sigma_f^2\left(1+\frac{\sqrt{3}|x_g-x_h|}{\ell}\right)
\exp\!\left(-\frac{\sqrt{3}|x_g-x_h|}{\ell}\right)
\end{align*}
The hyperparameters are set to \(\sigma_f=3.50\) and \(\ell=\SI{1.00}{km}\).

\begin{table}[t]
\caption{Main experimental configuration.}
\label{tab:config}
\centering
\small
\begin{tabular}{@{}ll@{}}
\toprule
Item & Setting \\
\midrule
Historical fitting window & \(T_c=\SI{7}{days}\) \\
Segment length & \(L=\SI{17.811}{km}\) \\
Segment discretization & \(G=80\) cells \\
Candidate spacing & \SI{0.05}{km} \\
Detection parameters & \(\rho=0.95,\ \sigma_\ell=0.90\) \\
Selected hyperparameters & \(\sigma_f=3.50,\ \ell=\SI{1.00}{km}\) \\
Planning horizon & \(T=\SI{1}{day}\) \\
Pathwise Monte Carlo bank & \(W_{\mathrm{main}}=1800\) \\
Ranking bank & \(W_{\mathrm{rank}}=350\) \\
Random prefixes & \(P=100\) \\
Paired endpoint bank & \(W_{\mathrm{paired}}=4000\) \\
\bottomrule
\end{tabular}
\end{table}

\textbf{Posterior characteristics.}
Fig.~\ref{fig:posterior} shows clear spatial heterogeneity in the posterior intensity estimated via INLA. This heterogeneity creates a regime in which uncertainty-aware approximations can differ meaningfully from the mean-only surrogate \(\nu_{\mathrm{J}}(\mathcal{A}) = \exp\!\left(-\mu_{Y_g}(\mathcal{A})\right)\) \cite{kim2023optimal}.

\begin{figure}[t]
\centering
\includegraphics[width=\linewidth]{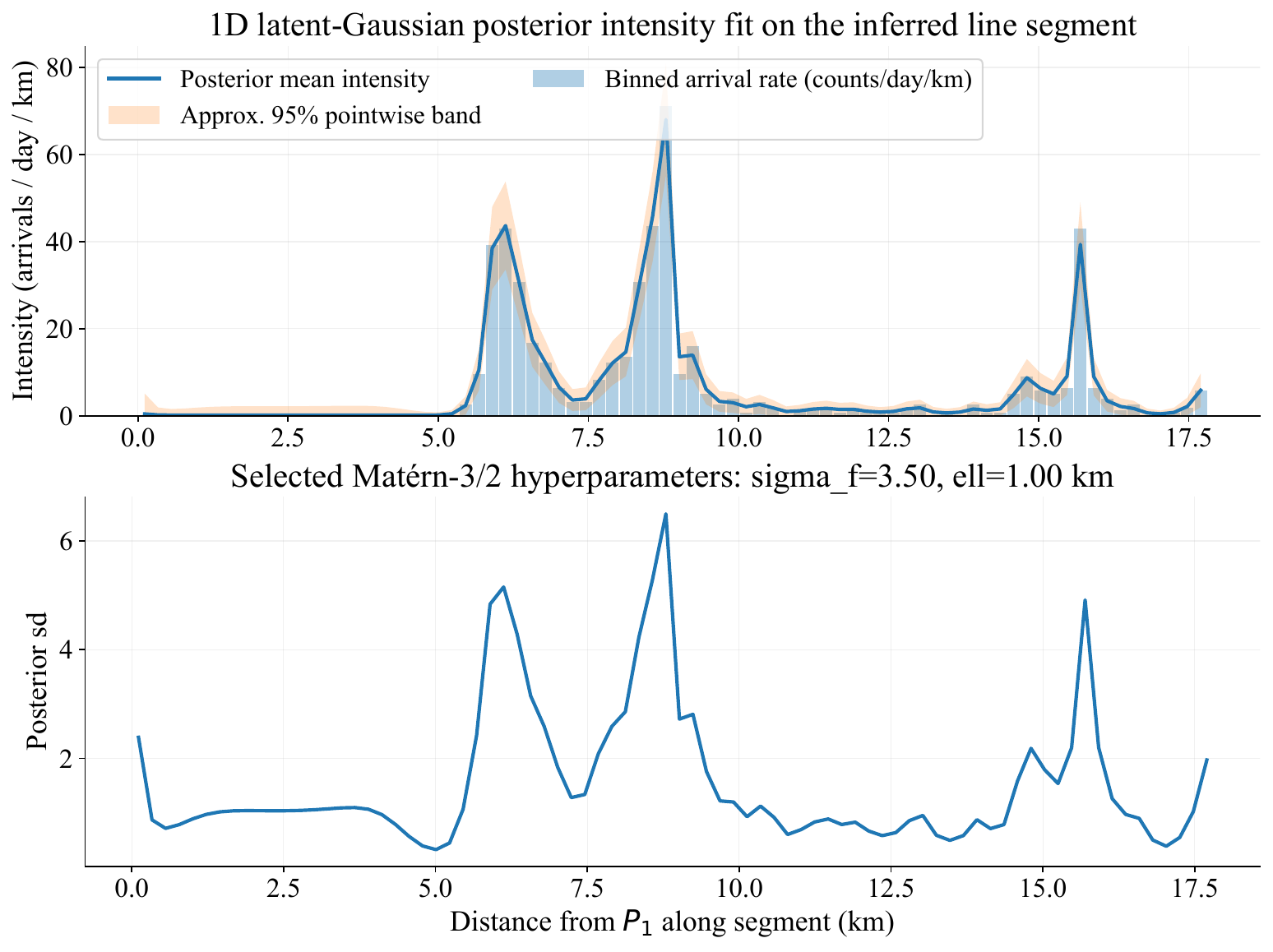}
\caption{Posterior fit on the inferred segment. Top: binned rate, posterior mean, and approximate pointwise band. Bottom: posterior standard deviation.}
\label{fig:posterior}
\end{figure}

\textbf{Evaluation setup and uncertainty scaling.}
The fitted posterior mean intensity is denoted by \(\bar\lambda\), and the latent covariance by \(\Sigma\). A mean-preserving lognormal scaling is used to control uncertainty while keeping the average intensity fixed. Gaussian draws \(Z^{(w)}\sim\mathcal{N}(0,\Sigma)\) define
\(m_c = \log\bar\lambda - \tfrac12 c\,\diag(\Sigma)\) and \(f^{(w,c)} = m_c + \sqrt{c}\,Z^{(w)}\), which lead to
\begin{align}
\lambda^{(w,c)} = \exp\bigl(f^{(w,c)}\bigr),
\qquad
\mathbb{E}[\lambda^{(w,c)}]=\bar\lambda
\label{eq:mean_preserving_b}
\end{align}
Three uncertainty regimes are examined: low uncertainty \((c=0.1)\), the fitted posterior \((c=1)\), and moderate uncertainty inflation \((c=10)\). Table~\ref{tab:config} summarizes the main experimental settings, where a 7-day window is used for fitting and a 1-day horizon is used for evaluation.

\textbf{Monte Carlo evaluation and metrics.}
Posterior Monte Carlo evaluation uses correlated samples generated from \eqref{eq:mean_preserving_b}. For a selected set \(\mathcal{A}\), the primary performance metrics are
\begin{align}
\widehat{\nu}(\mathcal{A}) &= \frac{1}{W}\sum_{w=1}^{W} e^{-Y_g^{(w)}(\mathcal{A})} \\
q_{0.05}(\mathcal{A}) &= \text{0.05-quantile of } \left\{ e^{-Y_g^{(w)}(\mathcal{A})} \right\}_{w=1}^{W}
\label{eq:q05}
\end{align}
where $X_g^{(w)}=T\Delta\lambda_g^{(w,c)}$ and $Y_g^{(w)}(\mathcal A)=\sum_g X_g^{(w)}\pi_g(\mathcal A)$ are correlated posterior realizations of the cell and grid aggregate exposures. The same grid definition is used to evaluate placements selected by all three surrogates. The main pathwise evaluation uses \(W_{\mathrm{main}}=1800\) samples to estimate the void probability \(\mathbb{E}[\exp(-Y_g(\mathcal{A}))]\) in \eqref{eq:discrete_void}. Ranking diagnostics use \(W_{\mathrm{rank}}=350\), common random-prefix diagnostics use \(P=100\) paths, and paired endpoint comparisons use a separate bank of \(W_{\mathrm{paired}}=4000\) samples.

\begin{table}[t]
\caption{Endpoint comparison at $K=20$. Means and quantiles use $W_{\rm main}$; paired differences and confidence intervals use $W_{\rm paired}$, so differences need not equal the displayed mean subtraction.}
\label{tab:main}
\centering
\scriptsize
\setlength{\tabcolsep}{3pt}
\begin{tabular}{@{}lcccccc@{}}
\toprule
Regime & $\widehat{\nu}_{\mathrm{J}}$ & $\widehat{\nu}_{\mathrm{GL}}$ & $\Delta_{\mathrm{GL}-\mathrm{J}}$ & 95\% CI & $q_{0.05,\mathrm{J}}$ & $q_{0.05,\mathrm{GL}}$ \\
\midrule
$c=0.1$  & 0.3097 & 0.3097 & 0.0000 & [0.0000, 0.0000] & 0.2558 & 0.2558 \\
$c=1$    & 0.3328 & 0.3400 & 0.0061 & [0.0042, 0.0079] & 0.1403 & 0.0712 \\
$c=10$   & 0.4500 & 0.5974 & 0.1478 & [0.1455, 0.1500] & 0.2044 & 0.3630 \\
\bottomrule
\end{tabular}
\end{table}

For the planning horizon \(T=\SI{1}{day}\), the corresponding surrogate moments are $\mu_g = T\Delta\,\bar\lambda_g$ and $
v_g = (T\Delta\,\bar\lambda_g)^2\bigl(e^{c\Sigma_{gg}}-1\bigr)$.

\begin{figure*}[t]
    \centering
    \includegraphics[width=0.88\textwidth]{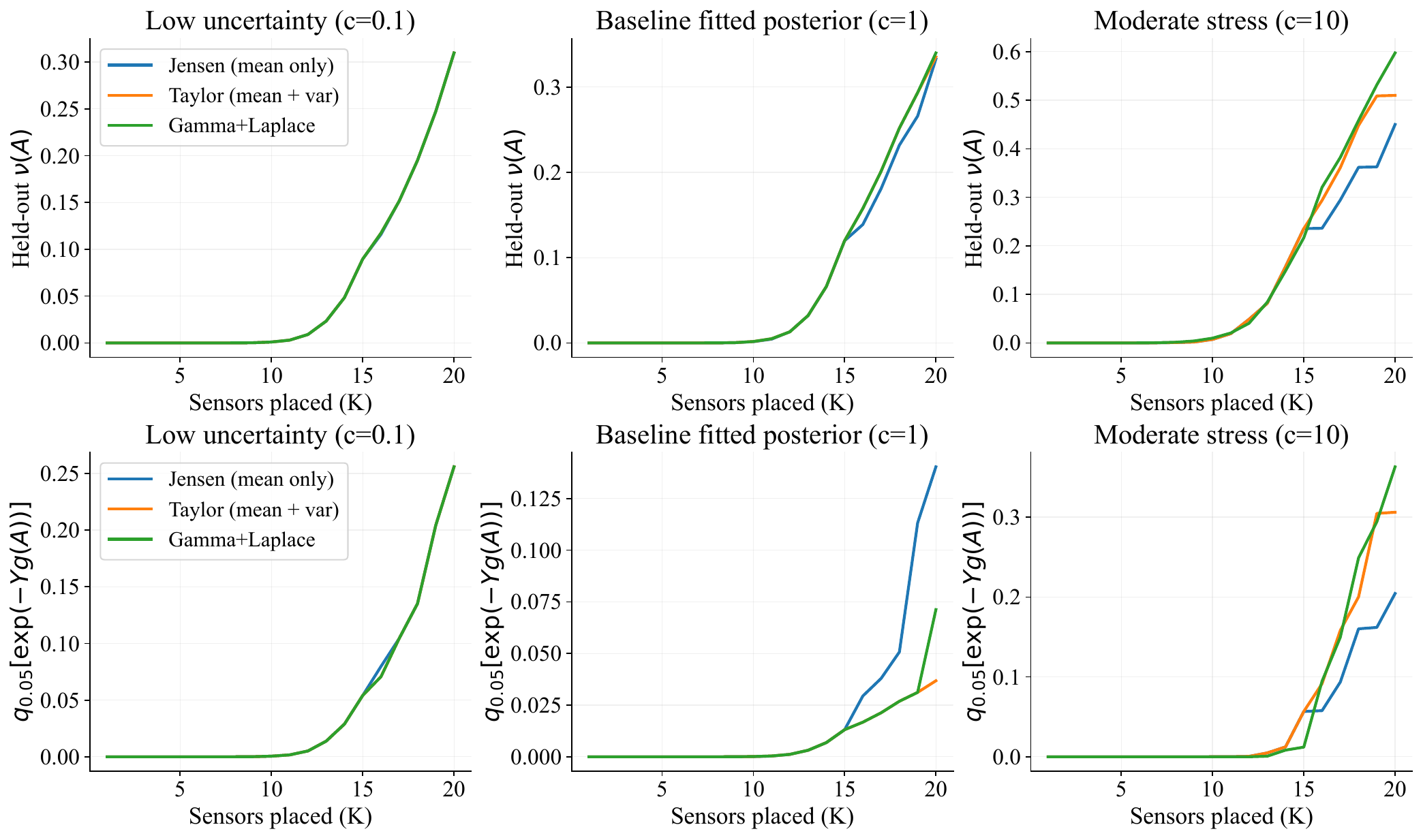}
    \caption{Posterior Monte Carlo greedy-path comparison. Top: $\widehat{\nu}(\mathcal A)$. Bottom: the 0.05-quantile of $e^{-Y_g(\mathcal A)}$. All methods are evaluated using correlated grid exposures.}
    \label{fig:greedy_paths}
\end{figure*}

\subsection{Greedy behavior and performance}
Table~\ref{tab:main} reports performance at \(K=20\). The low-uncertainty case serves as a check: Jensen and Gamma--Laplace coincide to numerical precision, consistent with Proposition~\ref{prop:zero_var}. At the fitted posterior (\(c=1\)), Gamma--Laplace modestly improves the mean value (from \(0.3328\) to \(0.3400\)), while the lower-tail metric decreases (from \(0.1403\) to \(0.0712\)). This indicates a tradeoff between average and lower-tail performance.

In the moderate-uncertainty regime (\(c=10\)), Gamma--Laplace improves both metrics simultaneously. The performance difference, evaluated using common random numbers (CRN), is \(0.1478\) with a 95\% confidence interval of \([0.1455, 0.1500]\), indicating that Gamma--Laplace consistently yields higher values of \(\widehat{\nu}\) than Jensen under identical Monte Carlo samples. The selected sensor sets also differ across regimes, sharing 20, 17, and 2 locations for \(c=0.1\), \(c=1\), and \(c=10\), respectively, so increasing uncertainty changes which locations are selected.

A second-order Taylor-based approximation \cite{kim2025variance} is also evaluated as a baseline. Table~\ref{tab:taylor_placeholder} reports its performance using the same Monte Carlo pipeline. It performs comparably in the fitted regime (\(c=1\)) but becomes unstable under increased uncertainty (\(c=10\)), with large approximation error and degraded ranking performance shown in Fig.~\ref{fig:rankdiag}.

\begin{table}[t]
\caption{Comparison of Taylor-based approximations using the ship-arrival Monte Carlo pipeline.}
\label{tab:taylor_placeholder}
\centering
\footnotesize
\setlength{\tabcolsep}{4pt}
\begin{tabular}{@{}lccc@{}}
\toprule
\textbf{Method (Regime)} & $\widehat{\nu}$ $(K=20)$ & \textbf{MAE} & \textbf{Mean Rank} \\
\midrule
Taylor ($c=1$)  & 0.3354 & $7.86\!\times\!10^{-5}$ & 6 \\
Taylor ($c=10$) & 0.5098 & $2.48\!\times\!10^{10}$ & 29.4 \\
\bottomrule
\end{tabular}
\end{table}

Fig.~\ref{fig:greedy_paths} shows that these improvements emerge throughout the greedy selection, not only at the final solution: the paths nearly coincide at \(c=0.1\), diverge slightly at \(c=1\), and the Gamma--Laplace trajectory clearly dominates both metrics at \(c=10\), particularly in later stages.

\begin{figure*}[t]
    \centering
    \includegraphics[width=0.76\textwidth]{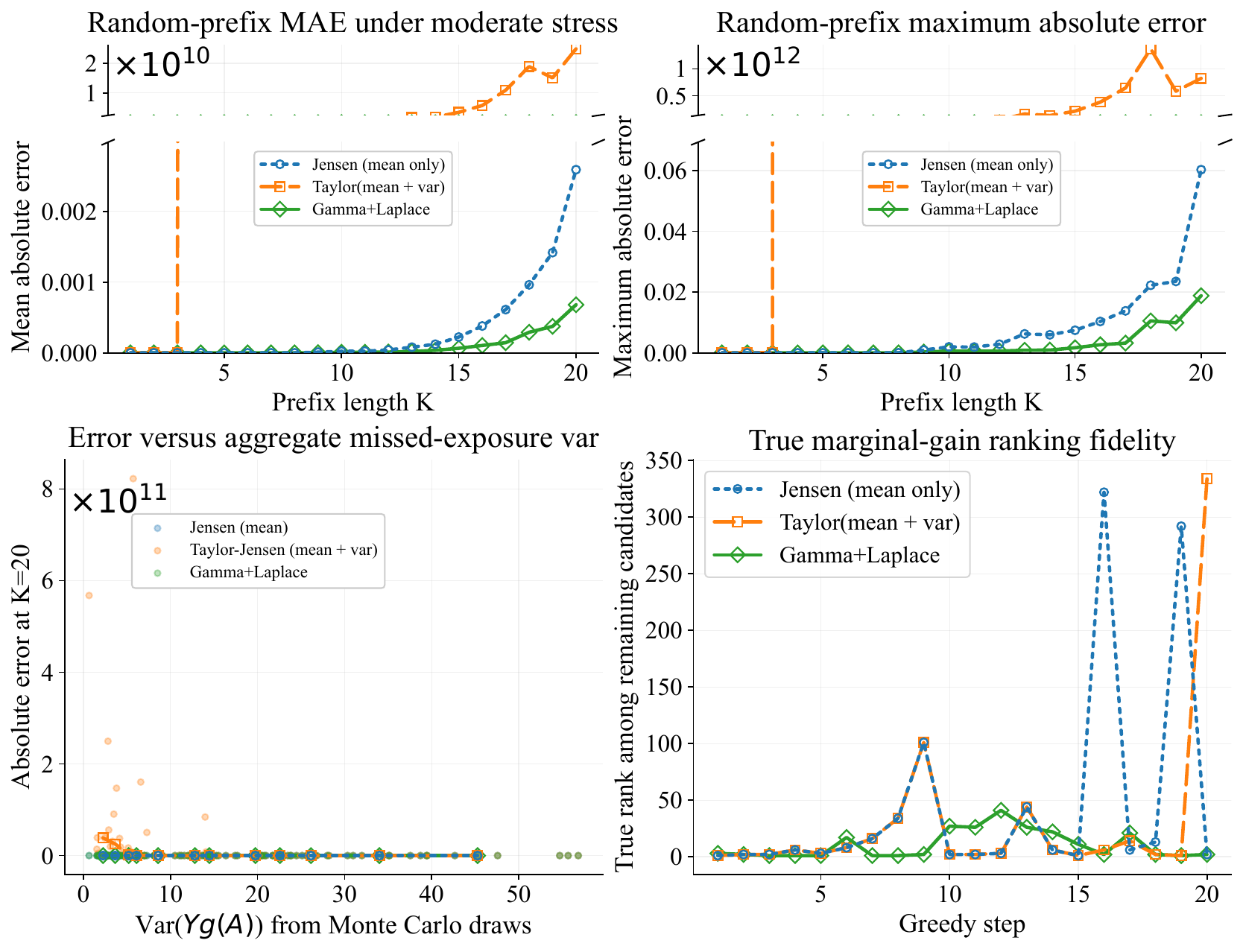}
    \caption{Moderate-regime prefix and ranking diagnostics at \(c=10\). Gamma--Laplace shows smaller common-prefix error and better late-stage ranking.}
    \label{fig:rankdiag}
\end{figure*}

\subsection{Surrogate fidelity and uncertainty effects}

Improved greedy performance is meaningful only if the surrogate accurately reflects the true objective. Table~\ref{tab:calibration} shows calibration and ranking diagnostics evaluated on common random placements or prefixes, leading to smaller absolute values than the metrics in Table~\ref{tab:main}. The results show that Gamma--Laplace is already better calibrated and provides more accurate candidate rankings at \(c=1\), and that this advantage increases substantially at \(c=10\).

\begin{table}[t]
\caption{Calibration and ranking fidelity at \(K=20\).}
\label{tab:calibration}
\centering
\scriptsize
\setlength{\tabcolsep}{3pt}
\begin{tabular}{@{}l l cccc@{}}
\toprule
Regime & Method & MAE & Slope & Mean rank & Spearman \\
\midrule
$c=1$  & Jensen & $2.56\!\times\!10^{-4}$ & 1.16  & 15.75 & 0.885 \\
$c=1$  & GL     & $4.90\!\times\!10^{-5}$ & 1.02  & 2.35  & 0.994 \\
$c=10$ & Jensen & $2.59\!\times\!10^{-3}$ & 2.32  & 43.30 & 0.735 \\
$c=10$ & GL     & $6.80\!\times\!10^{-4}$ & 0.845 & 10.50 & 0.972 \\
\bottomrule
\end{tabular}
\end{table}

\begin{figure*}
    \centering
    \includegraphics[width=0.76\textwidth]{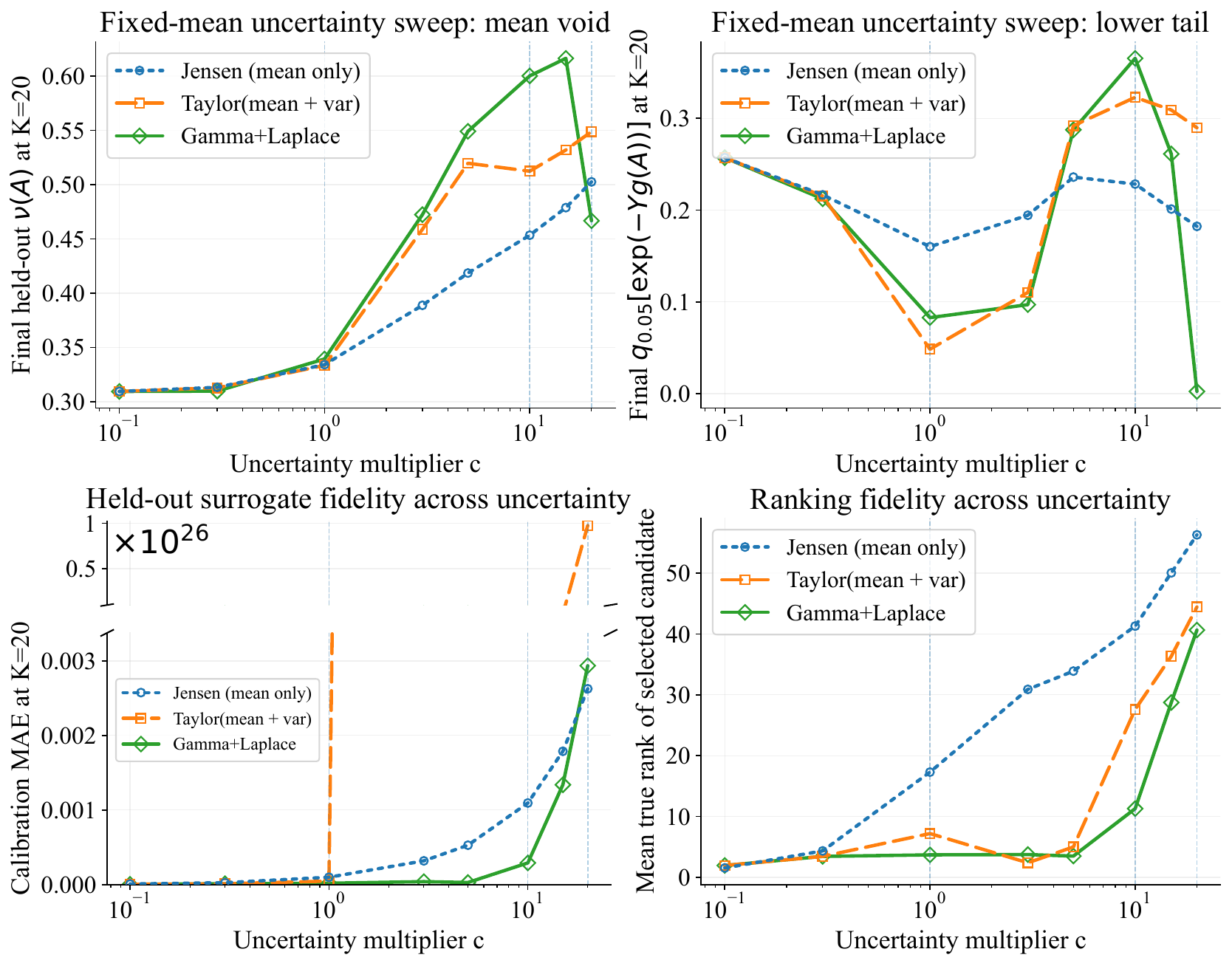}
    \caption{Fixed-mean uncertainty sweep. Top: final Monte Carlo mean void probability and lower-tail metric. Bottom: calibration MAE (broken axis) and mean selected-candidate rank. Points beyond the main $c\le10$ comparison show higher-uncertainty stress behavior, not uniform superiority.}
    \label{fig:sweep}
\end{figure*}

\section{Discussion}
\label{sec:discussion}

The results show that Gamma--Laplace improves surrogate fidelity and, consequently, greedy decisions. It yields lower mean absolute error, slopes closer to one (indicating better calibration), and stronger ranking metrics (mean rank and Spearman correlation). Because greedy selection depends on accurate marginal-gain ordering, these improvements explain the superior selection behavior observed in Fig.~\ref{fig:greedy_paths}.

In the fitted and moderate-uncertainty regimes, Gamma--Laplace matches the Monte Carlo objective more closely than Jensen. Proposition~\ref{prop:gamma_exact} establishes exactness only under the independent-Gamma model; agreement with the correlated Monte Carlo objective is an empirical result, not a consequence of that proposition. Fig.~\ref{fig:rankdiag} also shows that Gamma--Laplace accumulates less error along common prefixes and preserves the true ranking more accurately in later stages.

Fig.~\ref{fig:sweep} shows how the benefit of incorporating variance develops with increasing uncertainty over \(c \in \{0.1, 0.3, 1, 3, 5, 10\}\): Gamma--Laplace improves the mean metric beyond \(c=1\), while lower-tail improvements emerge at moderate stress. The additional higher-uncertainty points in the sweep show that the advantage is not uniform.

\begin{table}[t]
\caption{Robustness summary at $c=10$. Entries show the range of relative Gamma--Laplace gains across tested settings.}
\label{tab:robustness}
\centering
\scriptsize
\begin{tabular}{@{}lcc@{}}
\toprule
Group & Mean gain (\%) & $q_{0.05}$ gain (\%) \\
\midrule
Discretization & 11.8--36.3 & 11.3--111.1 \\
Sensor model & 16.9--41.7 & 8.4--68.9 \\
Planning horizon & 14.5--66.9 & 24.2--86.7 \\
Extraction rule & 35.0 & 93.2 \\
\bottomrule
\end{tabular}
\end{table}

\subsection{Robustness}

The moderate-uncertainty comparison remains positive across the tested sensitivity variations: discretizations \(G\in\{60,80,100\}\), sensor-kernel settings near \((\rho,\sigma_\ell)=(0.95,0.90)\), planning horizons \(T\in\{0.5,1,2\}\) days, and an extraction-rule variant that excludes points lying exactly on or collinear with the segment. Across all cases, Gamma--Laplace yields higher values of \(\widehat{\nu}\) and improved lower-tail performance. Table~\ref{tab:robustness} reports the corresponding ranges of relative gains at \(c=10\).

\subsection{Limitations and interpretation}

The comparison is not uniformly favorable: in the fitted-posterior regime, gains in the mean metric need not improve lower-tail behavior. As discussed under Assumption~\ref{ass:factorization}, the surrogate retains only cellwise means and marginal variances, so its guarantees apply to the surrogate objective rather than the exact correlated-posterior objective evaluated in Sec.~\ref{sec:numerical}.

The uncertainty multiplier \(c\) is a mean-preserving mechanism that isolates the effect of variance while keeping the fitted mean intensity fixed, rather than representing a validated posterior regime. In this sense, Jensen depends only on the mean, the second-order correction in \cite{kim2025variance} incorporates aggregate variance, and Gamma--Laplace captures how uncertainty is distributed across cells. In heterogeneous settings, this distinction can influence marginal-gain ranking and, consequently, sensor placement decisions.

\section{Conclusion}
Sensor placement to maximize void probability was studied using three surrogates: the Jensen mean-only approximation, a scalar Taylor mean--variance correction, and the proposed cell-structured Gamma--Laplace surrogate. After log transformation and a constant shift, Gamma--Laplace induces a monotone-submodular objective with the classical greedy approximation guarantee for the transformed problem. Jensen and Gamma--Laplace coincide in the low-uncertainty regime. In the fitted and moderate-uncertainty regimes, Gamma--Laplace tracks the Monte Carlo objective more closely and improves ranking accuracy and greedy selection behavior. The ship-traffic results show that incorporating cellwise uncertainty provides a practical advantage for sensor placement under uncertainty.

\bibliographystyle{IEEEtran}
\bibliography{main}

\end{document}